\documentclass[11pt]{article}
\usepackage{amsfonts,amsthm, amsmath}
\usepackage{epsfig}
\usepackage{makeidx}
\usepackage{commath}
\usepackage{subfigure}
\usepackage{graphicx}

\oddsidemargin 0.2cm \evensidemargin -0.3cm \textwidth 6.6in
\textheight 8.8in \topmargin -0.5in
\parindent .5in
\makeindex
\headsep .5in

\newcommand{\ncom}{\newcommand}
\ncom{\ul}{\underline}
\ncom{\beq}{\begin{equation}}
\ncom{\eeq}{\end{equation}}
\ncom{\bea}{\begin{eqnarray*}}
\ncom{\eea}{\end{eqnarray*}}
\ncom{\beqa}{\begin{eqnarray}}
\ncom{\eeqa}{\end{eqnarray}}
\ncom{\nno}{\nonumber}
\ncom{\non}{\nonumber}
\ncom{\ds}{\displaystyle}
\ncom{\half}{\frac{1}{2}}
\ncom{\mbx}{\makebox{.25cm}}
\ncom{\hs}{\mbox{\hspace{.25cm}}}
\ncom{\rar}{\rightarrow}
\ncom{\Rar}{\Rightarrow}
\ncom{\noin}{\noindent}
\ncom{\bc}{\begin{center}}
\ncom{\ec}{\end{center}}
\ncom{\sz}{\scriptsize}
\ncom{\rf}{\ref}
\ncom{\s}{\sqrt{2}}
\ncom{\sgm}{\sigma}
\ncom{\Sgm}{\Sigma}
\ncom{\psgm}{\sigma^{\prime}}
\ncom{\dt}{\delta}
\ncom{\Dt}{\Delta}
\ncom{\lmd}{\lambda}
\ncom{\Lmd}{\Lambda}
\ncom{\Th}{\Theta}
\ncom{\e}{\eta}
\ncom{\eps}{\epsilon}
\ncom{\pcc}{\stackrel{P}{>}}
\ncom{\lp}{\stackrel{L_{p}}{>}}
\ncom{\dist}{{\rm\,dist}}
\ncom{\sspan}{{\rm\,span}}
\ncom{\re}{{\rm Re\,}}
\ncom{\im}{{\rm Im\,}}
\ncom{\sgn}{{\rm sgn\,}}
\ncom{\ba}{\begin{array}}
\ncom{\ea}{\end{array}}
\ncom{\hone}{\mbox{\hspace{1em}}}
\ncom{\htwo}{\mbox{\hspace{2em}}}
\ncom{\hthree}{\mbox{\hspace{3em}}}
\ncom{\hfour}{\mbox{\hspace{4em}}}
\ncom{\vone}{\vskip 2ex}
\ncom{\vtwo}{\vskip 4ex}
\ncom{\vonee}{\vskip 1.5ex}
\ncom{\vthree}{\vskip 6ex}
\ncom{\vfour}{\vspace*{8ex}}
\ncom{\integ}[4]{\int_{#1}^{#2}\,{#3}\,d{#4}}
\ncom{\vspan}[1]{{{\rm\,span}\{ #1 \}}}
\ncom{\dm}[1]{ {\displaystyle{#1} } }
\ncom{\ri}[1]{{#1} \index{#1}}

\newtheorem{proposition}{Proposition}[section]

\newtheoremstyle
    {remarkstyle}
    {}
    {11pt}
    {}
    {}
    {\bfseries}
    {:}
    {     }
    {\thmname{#1} \thmnumber{#2} }

\theoremstyle{remarkstyle}



\begin{document}

\newpage

\begin{center}
{\Large \bf Normal Inverse Gaussian Autoregressive Model Using EM Algorithm}\\
\end{center}
\vone
\begin{center}
{\bf  Monika S. Dhull$^a$ and Arun Kumar$^{a*}$}\\
$^{a}${\it Department of Mathematics,
Indian Institute of Technology Ropar,\\ Rupnagar, Punjab 140001, India.}\\
\end{center}
\makeatletter{\renewcommand*{\@makefnmark}{}
\footnotetext{*Email: arun.kumar@iitrpr.ac.in}\makeatother}
\vtwo
\begin{center}
\noindent{\bf Abstract}
\end{center}
In this article, normal inverse Gaussian (NIG) autoregressive model is introduced. The parameters of the model are estimated using Expectation Maximization (EM) algorithm. The efficacy of the EM algorithm is shown using simulated and real world financial data. It is shown that NIG autoregressive model fit very well on the considered financial data and hence could be useful in modeling of various real life time-series data.

\vone \noindent{\it Key words:} Normal inverse Gaussian distribution, autoregressive model, EM algorithm, Monte Carlo simulations.
\vone
\noindent {\it MSC:} 91B02, 62P05, 62F10
\vone
\setcounter{equation}{0}

\section{Introduction}
The most simple and intuitive time series model encountered in the time-series modeling is the standard first-order autoregressive process which is also denoted by AR(1) \cite{Tsay2005}. In the AR(1) model each new entry is the sum of two terms one is proportional to the previous entry and  the another is a white-noise process also called the error term or the innovation term. In a white-noise process variables are independent and identically distributed (iid) with zero mean. If the variables are Gaussian this is called Gaussian white noise. 
However in financial markets the observed time series (log-returns) distributions are non-Gaussian and have tails heavier than Gaussian and lighter than power law. These kind of distributions are also called semi heavy-tailed distributions ( see e.g. \cite{Rachev2003, Cont2004, Omeya2018}). For a literature survey on different innovation distributions or marginal distributions in the AR(1) model \cite{Grunwald1996}. 
The AR(1) models with non-Gaussian innovation terms are very well considered in the literature. Sim \cite{Sim1990} considered AR(1) model with Gamma process as the innovation term. For AR models with innovations following a Student's t-distribution see e.g. \cite{Tiku2000, Tarami2003, Christmas2011, Nduka2018} and references therein. Note that $t$-distribution is used in modeling of asset returns \cite{Heyde2005}.

Normal inverse Gaussian (NIG) distribution introduced by Barndorff-Nielsen \cite{Barndorff1997a} is a semi heavy tailed distribution with tails heavier than the Gaussian but lighter than the power law tails. NIG distribution is defined as the normal variance-mean mixture where the mixing distribution is the inverse Gaussian distribution. NIG distributions and processes are used to model the returns from the financial time-series see e.g. \cite{Kalemanova2007, Barndorff1997b}. A more general distribution which is obtained by taking normal variance-mean mixture with mixing distribution as generalised inverse Gaussian distribution is called generalised hyperbolic distribution \cite{Barndorff2013}.

In this article, we introduce an AR(1) model with NIG innovation terms. Due to heavy-tailedness of the innovation term it can model large jumps in the observed data. The introduced process very well model the Google stock price time-series. The non-Gaussian behaviour of the innovation term of the Google stock price is shown using QQ plot and Kolmogorov-Smirnov test. From a market risk management perspective obtaining
reasonable extreme observation levels 
is a crucial objective in modeling. Since, it capture the market extreme movements which a Gaussian based model doesn't capture. The parameters of the model are estimated using EM algorithm. The efficacy of the estimation procedure is shown on the simulated data. The rest of the paper is organised as follows. In Section 2, the NIG autoregressive model is defined along with important properties of the NIG distribution. Section 3, discuss the estimation procedure of the parameters of the introduced model using EM algorithm. The efficacy of the estimation procedure on simulated data and the real world financial data application is discussed in Section 4.  Last Section concludes.

\setcounter{equation}{0}
\section{NIG Autoregressive Model}
We introduce the AR(1) model with iid normal inverse Gaussian (NIG) innovations  $\varepsilon_{t}$. Consider a univariate time series $ Y_1, Y_2, \cdots, Y_t$ following 

\begin{equation}\label{main_model}
Y_t = \rho Y_{t-1} + \varepsilon_{t},
\end{equation} 
where $\varepsilon_{t}\sim$ NIG$(\alpha, \beta, \mu, \delta)$. NIG is a semi heavy tailed distribution which can be obtained as normal mean-variance mixture with inverse Gaussian as mixing distribution.
The conditional distribution of $Y_t$ given all the parameters $\rho, \alpha, \beta, \mu, \text{and } \delta$ conditioning on all the preceding data $\mathcal{F}_{t-1}$ (i.e. $Y_1, Y_2, \cdots, Y_{t-1}$) only depend on previous sample $Y_{t-1}$ and has the form
\begin{align*}
p(Y_{t}|\rho, \alpha, \beta, \mu, \delta, \mathcal{F}_{t-1}) = p(Y_{t}|\rho, \alpha, \beta, \mu, \delta, Y_{t-1})
 = f_{t}(y_t; \rho y_{t-1}, \alpha, \beta, \mu, \delta),
\end{align*}
where $f_{t}(\cdot)$ denotes the probability density function (pdf) of NIG$(\alpha, \beta, \mu, \delta),$ which is given by 
$$f(x; \alpha, \beta, \mu, \delta) = \frac{\alpha}{\pi}\exp\left(\delta \sqrt{\alpha^2 - \beta^2} - \beta \mu\right)\phi(x)^{-1/2}K_{1}(\delta \alpha \phi(x)^{1/2})\exp(\beta x),\;x\in \mathbb{R}, $$ 
where $\phi(x) = 1+[(x-\mu)/\delta ]^2$, $\alpha = \sqrt{\gamma^2+\beta^2}$ and $K_\nu(x)$ denotes the modified Bessel function of the third kind of order $\nu$ evaluated at $x$ and is defined by
$$
K_\nu(x) = \frac{1}{2}\int_{0}^{\infty}y^{\nu-1}e^{-\frac{1}{2}x(y+y^{-1})}dy.
$$
Further, the following properties of $K_\nu(\cdot)$ are used in subsequent calculations.
$$K_{-\nu}(x) = K_{\nu}(x);\;\;\; K_{2}(x) = K_{0}(x) + \frac{2}{x}K_{1}(x).$$ 
A pdf $h(x)$ is called a semi-heavy tailed pdf if
$$
h(x) \sim e^{-cx} g(x),\; c>0,
$$
where $g$ is a regularly varying function \cite{Omeya2018}. Recall that a positive function $g$ is regularly varying with index $\alpha$ if
$$
\lim_{x\rightarrow\infty}\frac{g(dx)}{g(x)} = d^{\alpha}, \; d>0.
$$
Using $K_{\nu}(\omega) \sim \sqrt{\frac{\pi}{2}}e^{-\omega}\omega^{-1/2}$, as $\omega \rightarrow \infty$ \cite{Jorgensen1982}, we have
\begin{align*}
f(x; \alpha, \beta, \mu, \delta) &\sim  \frac{\alpha}{\pi} e^{(\delta \sqrt{\alpha^2 - \beta^2} - \beta \mu)}\phi(x)^{-1/2} \sqrt{\frac{\pi}{2}} e^{-\delta\alpha\sqrt{\phi(x)}} \left(\delta\alpha \sqrt{\phi(x)}\right)^{-1/2}e^{\beta x}\\
& = \frac{\alpha}{\pi} e^{(\delta \sqrt{\alpha^2 - \beta^2} - \beta \mu)}\sqrt{\frac{\pi}{2}}(\delta\alpha)^{-1/2}\phi(x)^{-3/4}e^{-\delta\alpha\sqrt{\phi(x)}} e^{\beta x}\\
&\sim  \frac{\alpha}{\pi} e^{(\delta \sqrt{\alpha^2 - \beta^2} - \beta \mu)}\sqrt{\frac{\pi}{2}}(\delta\alpha)^{-1/2} \delta^{3/4}x^{-3/4}e^{\mu\alpha} e^{-(\alpha-\beta)x},\;\alpha>\beta,
\end{align*}
as $x\rightarrow \infty.$ Thus the pdf of NIG is semi-heavy tailed. Due to exponentially decaying tails the moments of all order exist for NIG distribution. The NIG distributed random variable can be generated from its normal variance-mean mixture form, such that
\begin{align}\label{mean_variance_mixture}
X = \mu + \beta G + \sqrt{G} Z,
\end{align} where $Z$ is standard normal i.e. $Z\sim N(0,1)$ and $G$ has an inverse Gaussian distribution (IG) with parameters $\gamma$ and $\delta$ denoted by $G\sim$IG$(\gamma, \delta)$. The mixing distribution IG$(\gamma, \delta)$ having pdf
\begin{equation}\label{IG_density}
g(x; \gamma, \delta) = \frac{\delta}{\sqrt{2\pi}}\exp(\delta \gamma)x^{-3/2}\exp\left(-\frac{1}{2}\left(\frac{\delta^2}{x} + \gamma^2x\right)\right),\;x>0.
\end{equation}
This form of NIG distribution makes it suitable to use EM algorithm for ML estimation of parameters. For AR(1) model, consider that the innovations $\varepsilon_{t}$ follows NIG distribution and has the form defined in \eqref{main_model}. The conditional distribution $\varepsilon_{t}| G_t = g \sim N(\mu + \beta g, g),$ here $G_t$ are independent copies of $G \sim $ IG$(\gamma, \delta)$.

\section{Parameter Estimation Using EM Algorithm}
In this section, we provide a step-by-step procedure to estimate the parameters of the proposed normal inverse Gaussian autoregressive model of order 1 called NIGAR(1) model using EM algorithm. The EM algorithm for NIG innovation is based on \cite{Karlis2002}. Note that, EM algorithm is a general iterative algorithm for model parameter estimation by maximizing the likelihood in presence of missing data. An alternative to numerical optimization of the likelihood function, the EM algorithm was introduced in 1977 by Dempster, Laird and Rubin \cite{Dempster1977}. The EM algorithm is popularly used in estimating Gaussian mixture models (GMMs), estimating hidden Markov models (HMMs) and model based data clustering. Recently, EM algorithm is also used in parameter estimation of time-series model \cite{Metaxoglou2007, Kim2008}.  Some extensions of EM algorithm include expectation conditional maximization (ECM) algorithm proposed by Meng and Rubin \cite{Meng1993}; expectation conditional maximization either (ECME) algorithm of Liu and Rubin \cite{Liu1995}. A detailed discussion on the theory of EM algorithm and its extensions is given in McLachlan and Krishnan \cite{McLachlan2007}. The EM algorithm iterates between two steps:
\begin{itemize}
\item [] {\it E-Step:} In this step, create a function $Q(\theta |\theta^{(k)})$ to compute the expectation of log-likelihood of unobserved/complete data $(X, G)$ with respect to the conditional distribution of $G$ given $X$, using the current estimate for the parameters.$$Q(\theta|\theta^{(k)}) = \mathbb{E}_{G|X,\theta^{(k)}}[\log f(X,G|\theta)|X, \theta^{(k)}].$$
\item[] {\it M-Step:} Compute the parameters by maximizing the expected log-likelihood of complete data found on the {\it E-step} such that
$$\theta^{(k+1)} = \operatorname*{argmax}_\theta Q(\theta|\theta^{(k)}).$$
\end{itemize}
The algorithm is proven to be numerically stable. Also, as a consequence of Jensen's inequality, log-likelihood function at the updated parameters $\theta^{(k+1)}$ will not be less than that at the current values $\theta^{(k)}$. Although there is always a concern that the algorithm might get stuck at local extrema, but it can be handled by starting from different initial values and comparing the solutions. In next proposition, we provide the estimates of the parameters if NIGAR(1) model using EM algorithms.

\begin{proposition}\label{NIG_AR(1)} 
Suppose the innovation terms $\varepsilon_t$ in the AR(1) time-series model $Y_t = \rho Y_{t-1} + \varepsilon_t$ are from NIG$(\alpha, \beta, \mu, \delta)$ distribution. Then, the ML estimates of the model parameters using EM algorithm are as follows:

{ \tiny
\begin{align}
\hat{\rho} &= \frac{\left(n\displaystyle\sum_{t=1}^{n}w_{t} y_{t}y_{t-1} - \sum_{t=1}^{n}w_{t} y_{t}\sum_{t=1}^{n}y_{t-1}\right)\left(\displaystyle\sum_{t=1}^{n}s_{t}\sum_{t=1}^{n}w_{t} - n^2\right) 
+ \left(n\displaystyle\sum_{t=1}^{n}w_{t} y_{t-1} - \sum_{t=1}^{n}w_{t} \sum_{t=1}^{n}y_{t-1}\right)\left(n\displaystyle\sum_{t=1}^{n}y_{t} - \sum_{t=1}^{n}w_{t} y_{t}\sum_{t=1}^{n}s_{t}\right)}{\left(n\displaystyle\sum_{t=1}^{n}w_{t} y_{t-1} - \sum_{t=1}^{n}w_{t} \sum_{t=1}^{n}y_{t-1}\right)\left(n\displaystyle\sum_{t=1}^{n}y_{t-1} - \sum_{t=1}^{n}w_t y_{t-1}\sum_{t=1}^{n}s_t\right) + \left(n\displaystyle\sum_{t=1}^{n}w_{t}y_{t-1}^{2} - \sum_{t=1}^{n}w_{t}y_{t-1}\sum_{t=1}^{n}y_{t-1}\right)\left(\displaystyle\sum_{t=1}^{n}s_{t}\sum_{t=1}^{n}w_{t} - n^2\right)};
\end{align}

\begin{align}
\hat {\mu} &= \frac{n\displaystyle\sum_{t=1}^{n}w_{t} y_{t}y_{t-1} - \sum_{t=1}^{n}w_{t} y_{t}\sum_{t=1}^{n}y_{t-1} - \rho \left(n\displaystyle\sum_{t=1}^{n}w_{t}y_{t-1}^{2} - \sum_{t=1}^{n}w_{t}y_{t-1}\sum_{t=1}^{n}y_{t-1}\right)}{\left(n\displaystyle\sum_{t=1}^{n}w_{t} y_{t-1} - \sum_{t=1}^{n}w_{t} \sum_{t=1}^{n}y_{t-1}\right)};\\
\hat {\beta} &= \frac{ \displaystyle\sum_{t=1}^{n}w_{t} y_{t} -\rho \sum_{t=1}^{n}w_t y_{t-1}  - \mu \sum_{t=1}^n w_{t} }{n};\\
\hat {\delta} &= \sqrt{\frac{\bar{s}}{(\bar{s} \bar{w} - 1)}},\;\; \hat {\gamma} = \frac{\delta}{\bar{s}},\;\mbox{and}\;
\hat{\alpha} = (\gamma^2 + \beta^2)^{1/2},
\end{align} }
where $\bar{s} = \frac{\sum_{t=1}^{n}s_t}{n}, \bar{w} = \frac{\sum_{t=1}^{n}w_t}{n}. $
\end{proposition}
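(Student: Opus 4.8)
The plan is to treat the latent mixing variables $G_t$ as the missing data and run the EM recipe on the complete-data likelihood. By the normal variance--mean mixture representation \eqref{mean_variance_mixture}, conditionally on $G_t=g_t$ the innovation $\varepsilon_t=Y_t-\rho Y_{t-1}$ is $N(\mu+\beta g_t,\,g_t)$, so the complete datum $(Y_t,G_t)$ given $Y_{t-1}$ has joint density equal to this Gaussian density times the IG density \eqref{IG_density} of $G_t$. First I would write the complete-data log-likelihood $\ell_c=\sum_{t=1}^{n}\bigl[\log f(y_t\mid g_t,y_{t-1})+\log g(g_t;\gamma,\delta)\bigr]$ and expand the Gaussian quadratic $(y_t-\rho y_{t-1}-\mu-\beta g_t)^2/(2g_t)$ into $c_t^2/(2g_t)-\beta c_t+\beta^2 g_t/2$, where $c_t=y_t-\rho y_{t-1}-\mu$. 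The crucial structural point is that $\ell_c$ depends on the latent data only through $g_t$ and $g_t^{-1}$, apart from a $\log g_t$ term carrying no parameter of interest.

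For the E-step I would replace $g_t$ and $g_t^{-1}$ by their conditional expectations $s_t=\E[G_t\mid y_t,\theta^{(k)}]$ and $w_t=\E[G_t^{-1}\mid y_t,\theta^{(k)}]$ at the current iterate; since the conditional law of $G_t$ given the data is generalized inverse Gaussian, these admit the closed Bessel-function forms recorded in \cite{Karlis2002}, while $\E[\log G_t\mid\cdot]$ contributes only an additive constant that drops out of the maximization. This produces
\[
Q(\theta\mid\theta^{(k)})=\mathrm{const}+\sum_{t=1}^{n}\Bigl[-\tfrac12 c_t^2 w_t+\beta c_t-\tfrac12\beta^2 s_t+\log\delta+\delta\gamma-\tfrac12\delta^2 w_t-\tfrac12\gamma^2 s_t\Bigr].
\]

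The M-step exploits that $Q$ separates into a block in $(\rho,\mu,\beta)$ and a block in $(\gamma,\delta)$, so the two groups maximize independently. For the shape block I would set $\partial Q/\partial\gamma=0$ to get $\gamma=\delta/\bar s$, substitute into $\partial Q/\partial\delta=0$, and solve the resulting $1+\delta^2/\bar s-\delta^2\bar w=0$ to obtain $\hat\delta=\sqrt{\bar s/(\bar s\bar w-1)}$; then $\hat\alpha=(\gamma^2+\beta^2)^{1/2}$ is immediate from the parameterization $\alpha=\sqrt{\gamma^2+\beta^2}$. For the location block I would set the three scores to zero, yielding the linear system
\[
\sum_t w_t c_t=n\beta,\qquad \sum_t c_t=\beta\sum_t s_t,\qquad \sum_t w_t c_t\,y_{t-1}=\beta\sum_t y_{t-1}.
\]
The first of these (the $\mu$-score, not the $\beta$-score) already delivers the stated $\hat\beta$ after writing $c_t=y_t-\rho y_{t-1}-\mu$.

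The remaining work, and the main obstacle, is purely algebraic: solving this $3\times 3$ system explicitly for $\hat\mu$ and $\hat\rho$. I would eliminate $\beta$ between the $\mu$-score and the $\rho$-score equations to express $\hat\mu$ as a linear function of $\rho$; the coefficient of $\mu$ arising in that elimination is exactly the denominator $n\sum w_t y_{t-1}-\sum w_t\sum y_{t-1}$ appearing in the stated $\hat\mu$. Substituting $\hat\beta$ and $\hat\mu(\rho)$ into the $\beta$-score equation $\sum_t c_t=\beta\sum_t s_t$ then leaves a single linear equation in $\rho$, and clearing the common factor $n$ combines the $\sum s_t$-weighted terms with the $w_t$- and count-$n$ terms to produce the determinant-type factors $\sum s_t\sum w_t-n^2$ visible in the closed form for $\hat\rho$. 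This bookkeeping is the only genuinely tedious part; the rest reduces to routine differentiation and the block separation of $Q$.
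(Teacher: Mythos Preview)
Your proposal is correct and follows essentially the same route as the paper: write the complete-data log-likelihood via the normal mean--variance mixture, take the E-step by replacing $g_t$ and $g_t^{-1}$ with their GIG posterior moments $s_t$ and $w_t$, then set the five partial derivatives of $Q$ to zero and solve. Your observation that $Q$ separates into a $(\rho,\mu,\beta)$ block and a $(\gamma,\delta)$ block, and your use of $c_t=y_t-\rho y_{t-1}-\mu$ to organize the score equations, make the linear-algebra bookkeeping cleaner than in the paper, but the underlying argument is identical.
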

\begin{proof}
For AR(1) model, to implement EM algorithm let $(\varepsilon_{t}, G_{t}),$ for $t = 1, 2, ..., n$ denote the complete data. The observable data $\varepsilon_{t}$ is assumed to be from NIG$(\alpha, \beta, \mu, \delta)$ and the unobserved data $G_{t}$ follows IG$(\gamma, \delta)$. Also, given the time series data $y_1, y_2, ..., y_n$, the innovations term can be rewritten as
$$\varepsilon_{t} = y_t - \rho y_{t-1}, \text{ for } t = 1, 2, ..., n.$$
Note that $\varepsilon| G = g \sim N(\mu + \beta g, g)$ and the conditional pdf is
\begin{align*}
f(\varepsilon=\varepsilon_t| G = g_t) 
= \frac{1}{\sqrt{2\pi g_t}}\exp\left(-\frac{1}{2g_t}(y_t - \rho y_{t-1} -\mu -\beta g_t)^2\right). 
\end{align*} 
Let $\tilde{\theta} = (\alpha, \beta, \mu, \delta, \rho)$ denote the vector of parameters we need to estimate. At {\it E-step}, we need to find the conditional expectation of log-likelihood of unobserved/complete data $(\varepsilon, G)$ with respect to the conditional distribution of $G$ given $\varepsilon$. 
In our case, unobserved data is from IG$(\gamma, \delta)$, or we can say prior distribution for G is IG$(\gamma, \delta)$, therefore, the posterior distribution is generalised inverse Gaussian (GIG) distribution. The distribution of $G | \varepsilon, \alpha, \beta, \mu, \delta, \rho$ is GIG$( -1, \delta \sqrt{\phi(\epsilon)}, \alpha)$. The conditional first moments and inverse first moment are as follows
\begin{align*}
\mathbb{E}(G|\epsilon) &= \frac{\delta \phi(\epsilon)^{1/2}}{\alpha}\frac{K_0(\alpha \delta \phi(\epsilon)^{1/2})}{K_1(\alpha \delta \phi(\epsilon)^{1/2})},\\
\mathbb{E}(G^{-1}|\epsilon) &= \frac{\alpha}{\delta \phi(\epsilon)^{1/2}}\frac{K_{-2}(\alpha \delta \phi(\epsilon)^{1/2})}{K_{-1}(\alpha \delta \phi(\epsilon)^{1/2})}.
\end{align*}

\noindent These expectations will be useful in finding the conditional expectation of the log-likelihood function.
The complete data likelihood is given by
\begin{align*}
L(\theta) &= \prod_{t=1}^{n}f(\epsilon_{t}, g_t) = \prod_{t=1}^{n}f_{\varepsilon|G}(\epsilon_{t} | g_t) f_{G}(g_t)\\
&= \prod_{t=1}^{n} \frac{\delta}{2 \pi g_{t}^2}\exp(\delta \gamma) \exp\left(-\frac{\delta^2}{2g_t} - \frac{\gamma^2 g_t}{2} - \frac{g_t^{-1}}{2}(y_{t} - \rho y_{t-1} - \mu)^2 - \frac{\beta^{2} g_t}{2} + \beta(y_t - \rho y_{t-1} - \mu)\right).
\end{align*}  

\noindent The log likelihood function will be
\begin{align*}
l(\theta) &= n\log(\delta) - n\log(2 \pi) + n\delta \gamma - n\beta \mu -2\sum_{t=1}^{n}\log(g_t) - \frac{\delta^2}{2} \sum_{t=1}^{n} g_t^{-1} \\
&- \frac{\gamma^2}{2} \sum_{t=1}^{n} g_t  - \frac{1}{2}\sum_{t=1}^{n} g_t^{-1}(y_t - \rho y_{t-1} - \mu)^2 - \frac{\beta^2}{2} \sum_{t=1}^{n} g_t + \beta \sum_{t=1}^{n}(y_t - \rho y_{t-1})
\end{align*}
Now to implement the {\it E-step} of EM algorithm, we need to compute $Q(\theta|\theta^k)$, which is the expected value of complete data log likelihood and can be expressed as 

\begin{align*}
Q(\theta|\theta^{(k)}) &= \mathbb{E}_{G|\varepsilon,\theta^{(k)}}[\log f(\varepsilon,G|\theta)|\varepsilon, \theta^{(k)}]= \mathbb{E}_{G|\varepsilon,\theta^{(k)}}[L(\theta|\theta^{(k)})]\\
&= n\log \delta + n\delta \gamma - n\beta \mu - n\log(2 \pi) - 2\sum_{t=1}^{n}\mathbb{E}(\log g_t| \epsilon_t, \theta^{(k)}) - \frac{\delta^2}{2}\sum_{t=1}^{n} w_t\\
&- \frac{\gamma^2}{2}\sum_{t=1}^{n} s_t - \frac{\beta^2}{2}\sum_{t=1}^{n} s_t + \beta \sum_{t=1}^n(y_t - \rho y_{t-1}) - \frac{1}{2}\sum_{t=1}^{n}(y_t - \rho y_{t-1} - \mu)^2 w_t,
\end{align*}
where, $s_t = \mathbb{E}_{G|\varepsilon,\theta^{(k)}}(g_t|\epsilon_t, \theta^{(k)})$ and 
$w_t = \mathbb{E}_{G|\varepsilon,\theta^{(k)}}(g_t^{-1}|\epsilon_t, \theta^{(k)})$. At {\it M-step}, to update the parameters, maximize the $Q$ function by solving the following equations:
\begin{align*}
    \frac{\partial{Q}}{\partial{\rho}} &=  \sum_{t=1}^{n}w_{t} y_{t}y_{t-1} -\beta \sum_{t=1}^{n}y_{t-1}  - \mu \sum_{t=1}^{n}w_{t}y_{t-1} - \rho \sum_{t=1}^{n}w_{t} y_{t-1}^{2},\\
    \frac{\partial{Q}}{\partial{\mu}} &= -n\beta + \sum_{t=1}^{n}w_{t} y_{t} - \mu \sum_{t=1}^{n}w_{t} - \rho \sum_{t=1}^{n}w_{t} y_{t-1},\\
    \frac{\partial{Q}}{\partial{\beta}} &= -n\mu + \sum_{t=1}^{n}y_{t} - \beta \sum_{t=1}^{n}s_{t} - \rho \sum_{t=1}^{n}y_{t-1}, \\
    \frac{\partial{Q}}{\partial{\delta}} &= n\gamma + \frac{n}{\delta} - \delta \sum_{t=1}^{n}w_{t}, \\
    \frac{\partial{Q}}{\partial{\gamma}} &= n\delta - \gamma\sum_{t=1}^{n}s_{t}.
\end{align*}

\noindent Solving the above equations, the following estimates of the parameters are obtained
{ \tiny
\begin{align*}
\hat{\rho} &= \frac{\left(n\displaystyle\sum_{t=1}^{n}w_{t} y_{t}y_{t-1} - \sum_{t=1}^{n}w_{t} y_{t}\sum_{t=1}^{n}y_{t-1}\right)\left(\displaystyle\sum_{t=1}^{n}s_{t}\sum_{t=1}^{n}w_{t} - n^2\right) 
+ \left(n\displaystyle\sum_{t=1}^{n}w_{t} y_{t-1} - \sum_{t=1}^{n}w_{t} \sum_{t=1}^{n}y_{t-1}\right)\left(n\displaystyle\sum_{t=1}^{n}y_{t} - \sum_{t=1}^{n}w_{t} y_{t}\sum_{t=1}^{n}s_{t}\right)}{\left(n\displaystyle\sum_{t=1}^{n}w_{t} y_{t-1} - \sum_{t=1}^{n}w_{t} \sum_{t=1}^{n}y_{t-1}\right)\left(n\displaystyle\sum_{t=1}^{n}y_{t-1} - \sum_{t=1}^{n}w_t y_{t-1}\sum_{t=1}^{n}s_t\right) + \left(n\displaystyle\sum_{t=1}^{n}w_{t}y_{t-1}^{2} - \sum_{t=1}^{n}w_{t}y_{t-1}\sum_{t=1}^{n}y_{t-1}\right)\left(\displaystyle\sum_{t=1}^{n}s_{t}\sum_{t=1}^{n}w_{t} - n^2\right)};
\end{align*}

\begin{align*}
\hat {\mu} &= \frac{n\displaystyle\sum_{t=1}^{n}w_{t} y_{t}y_{t-1} - \sum_{t=1}^{n}w_{t} y_{t}\sum_{t=1}^{n}y_{t-1} - \rho \left(n\displaystyle\sum_{t=1}^{n}w_{t}y_{t-1}^{2} - \sum_{t=1}^{n}w_{t}y_{t-1}\sum_{t=1}^{n}y_{t-1}\right)}{\left(n\displaystyle\sum_{t=1}^{n}w_{t} y_{t-1} - \sum_{t=1}^{n}w_{t} \sum_{t=1}^{n}y_{t-1}\right)};\\
\hat {\beta} &= \frac{ \displaystyle\sum_{t=1}^{n}w_{t} y_{t} -\rho \sum_{t=1}^{n}w_t y_{t-1}  - \mu \sum_{t=1}^n w_{t} }{n};\\
\hat {\delta} &= \sqrt{\frac{\bar{s}}{(\bar{s} \bar{w} - 1)}},\;\; \hat {\gamma} = \frac{\delta}{\bar{s}},\;\mbox{and}\;
\hat{\alpha} = (\gamma^2 + \beta^2)^{1/2},
\end{align*} }
where $\bar{s} = \frac{\sum_{t=1}^{n}s_t}{n}, \bar{w} = \frac{\sum_{t=1}^{n}w_t}{n}. $
\end{proof}

\begin{proposition}
For the time-series defined in \eqref{main_model}, we have
\begin{align*}
\mathbb{E}(Y_t) = \frac{\mu\gamma+\delta\beta}{\gamma}\left(\frac{1-\rho^{t+1}}{1-\rho}\right);\;\;\; {\rm Var}(Y_t) = \frac{\delta \alpha^2}{\gamma^3}\left(\frac{1-\rho^{2n+2}}{1-\rho^2}\right).
\end{align*}
\end{proposition}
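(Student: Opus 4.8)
The plan is to reduce the whole statement to the first two moments of a single NIG innovation and then propagate them through the linear recursion. First I would compute $\mathbb{E}(\varepsilon_t)$ and ${\rm Var}(\varepsilon_t)$ from the normal variance--mean mixture representation \eqref{mean_variance_mixture}, namely $\varepsilon_t = \mu + \beta G_t + \sqrt{G_t}\,Z_t$ with $G_t \sim$ IG$(\gamma,\delta)$ and $Z_t\sim N(0,1)$ independent. Reading off the moments of the inverse Gaussian density \eqref{IG_density}, which is the standard IG with $\mathbb{E}(G_t)=\delta/\gamma$ and ${\rm Var}(G_t)=\delta/\gamma^3$, and using the conditional law $\varepsilon_t\mid G_t=g\sim N(\mu+\beta g,\,g)$, the tower property gives $\mathbb{E}(\varepsilon_t)=\mu+\beta\,\mathbb{E}(G_t)=(\mu\gamma+\delta\beta)/\gamma$. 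The law of total variance then gives ${\rm Var}(\varepsilon_t)=\mathbb{E}(G_t)+\beta^2{\rm Var}(G_t)=\delta/\gamma+\beta^2\delta/\gamma^3=\delta(\gamma^2+\beta^2)/\gamma^3=\delta\alpha^2/\gamma^3$, the final equality using $\alpha^2=\gamma^2+\beta^2$.

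Next I would unwind the recursion $Y_t=\rho Y_{t-1}+\varepsilon_t$. Taking the process to start at $t=0$ with $Y_0=\varepsilon_0$, iteration yields the moving-average form $Y_t=\sum_{j=0}^{t}\rho^{\,j}\varepsilon_{t-j}$, a sum of $t+1$ independent NIG terms. Linearity of expectation then gives $\mathbb{E}(Y_t)=\mathbb{E}(\varepsilon)\sum_{j=0}^{t}\rho^{\,j}$, and since the $\varepsilon_{t-j}$ are mutually independent, the cross terms vanish and ${\rm Var}(Y_t)={\rm Var}(\varepsilon)\sum_{j=0}^{t}\rho^{2j}$. Finally, summing the two finite geometric series, $\sum_{j=0}^{t}\rho^{\,j}=(1-\rho^{t+1})/(1-\rho)$ and $\sum_{j=0}^{t}\rho^{2j}=(1-\rho^{2t+2})/(1-\rho^2)$, and substituting the single-innovation moments from the first step, produces exactly the claimed expressions.

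The computation itself is routine; the only point that genuinely requires care is the choice of initial condition. The stated mean carries $t+1$ geometric terms and the stated variance carries the exponent $2t+2$ (written $2n+2$ in the statement, presumably a typographical slip for $t$), which forces the convention $Y_0=\varepsilon_0$ so that the unwound sum runs over $j=0,\dots,t$. A different initialization, such as a fixed deterministic $Y_0$ or a stationary start, would alter both the number of terms and the transient, so I would state this convention explicitly before unwinding. I would also record the implicit restriction $\rho\neq 1$ (and $\rho\neq -1$ for the variance) required for the closed-form geometric sums, and note that finiteness of both moments is guaranteed since all moments of the NIG distribution exist, as established via the semi-heavy-tailed tail estimate in Section~2.
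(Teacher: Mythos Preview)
Your proposal is correct and follows the same overall strategy as the paper: compute the first two moments of a single NIG innovation, then propagate them through the linear recursion and sum the resulting geometric series. The execution differs slightly. You derive $\mathbb{E}(\varepsilon_t)$ and ${\rm Var}(\varepsilon_t)$ from the mixture representation \eqref{mean_variance_mixture} via conditioning on $G_t$, whereas the paper simply quotes these NIG moments. For ${\rm Var}(Y_t)$ the paper conditions on $Y_{t-1}$ and applies the law of total variance to obtain the recursion ${\rm Var}(Y_t)={\rm Var}(\varepsilon_t)+\rho^2{\rm Var}(Y_{t-1})$ before iterating, while you pass directly to the moving-average form $Y_t=\sum_{j=0}^{t}\rho^{j}\varepsilon_{t-j}$ and use independence of the $\varepsilon$'s; your route is shorter and avoids the conditioning device entirely. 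Your explicit discussion of the initial condition $Y_0=\varepsilon_0$ (which the paper leaves implicit and handles with some index looseness) and of the restriction $\rho\neq\pm 1$ is an improvement in rigor over the paper's argument.
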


\begin{proof}
For $\varepsilon_t \sim NIG(\alpha, \beta, \mu, \delta)$, we have $\mathbb{E}(\varepsilon_t) = \frac{\mu\gamma + \delta\beta}{\gamma}$ and $\mbox{Var}(\varepsilon_t) = \frac{\delta\alpha^2}{\gamma^3}.$
For an AR(1) time series, it follows 
\begin{align*}
\mathbb{E}(Y_t) & = \rho \mathbb{E}(Y_{t-1}) + \mathbb{E}\epsilon_t = \mathbb{E}\epsilon_t + \rho \mathbb{E}\epsilon_{t-1} + \rho^2 \mathbb{E}\epsilon_{t-2}+\cdots+ \rho^t\mathbb{E}\epsilon_1\\ & = \mathbb{E}(\epsilon_1)\left(\frac{1-\rho^{t+1}}{1-\rho}\right) = \frac{\mu\gamma+\delta\beta}{\gamma}\left(\frac{1-\rho^{t+1}}{1-\rho}\right).
\end{align*}
Using law of total variance to calculate ${\rm Var}(Y_t)$, yields to 
\begin{align*}
    {\rm Var}(Y_t)&= \mathbb{E}[{\rm Var}(Y_t|Y_{t-1}= y_{t-1})] + {\rm Var}[\mathbb{E}(Y_t|Y_{t-1}= y_{t-1})]\\
    &= \mathbb{E}[{\rm Var}(\rho Y_{t-1}+\varepsilon_t | Y_{t-1}= y_{t-1})] + {\rm Var}[\mathbb{E}(\rho Y_{t-1}+\varepsilon_t|Y_{t-1}=y_{t-1})]\\
    &= \mathbb{E}[{\rm Var}(\varepsilon_t | Y_{t-1}=y_{t-1})] + {\rm Var}[\mathbb{E}(\varepsilon_t) + \rho Y_{t-1}]\\
    &= {\rm Var}(\varepsilon_t) + {\rm Var}(\rho Y_{t-1})
    = {\rm Var}(\varepsilon_t) + \rho^2 {\rm Var}(Y_{t-1}).
    \end{align*}

\noindent Recursively using the above relation and since $\varepsilon_t$ are iid we can write,
\begin{align*}
    {\rm Var(Y_t)} &= {\rm Var}(\varepsilon_t) + (\rho^2+\rho^4+...+\rho^{2t}) {\rm Var}(\varepsilon_{t})\\ 
    &=  (1+\rho^2+\rho^4+...+\rho^{2n}) {\rm Var}(\varepsilon_{t})\\
    &= \frac{1-\rho^{2t-2}}{1-\rho^2} {\rm Var}(\varepsilon_{t})= \frac{\delta \alpha^2}{\gamma^3}\left(\frac{1-\rho^{2t-2}}{1-\rho^2}\right).
\end{align*}
\end{proof}



\noindent In \cite{Karlis2002}, an EM algorithm was developed to estimate the parameters of NIG distribution. 
The estimate of the parameter $\rho$ can be evaluated directly by applying the conditional least square method on the time series data ${y_{t}}$ which  is obtained by minimizing 
 \begin{equation}\label{CLS1}
 \sum_{i=1}^{n}\left[Y_{t+1}-\mathbb{E}(Y_{t+1}|Y_t,Y_{t-1},\cdots,Y_1)\right]^2,
 \end{equation}
 with respect to $\tilde{\theta}$ and leads to  
\begin{align}\label{cls_rho}
\hat{\rho} = \frac{\sum_{t=0}^{n}(y_{t} - \overline{y})(y_{t+1} - \overline{y})}{\sum_{t=0}^{n}(y_{t} - \overline{y})^2}.
\end{align}
    
\noindent For remaining parameters $\theta = (\alpha, \beta, \mu, \delta)$, we will use the EM algorithm proposed by Karlis \cite{Karlis2002} for ML estimation of NIG distribution. Using the steps given in \cite{Karlis2002}, following estimates of the parameters are obtained
\begin{align*}
\hat {\beta} &= \frac{\hat{\rho}\displaystyle \sum_{t=1}^{n}w_t y_{t-1} - \sum_{t=1}^{n}w_{t} y_{t} + \bar{y}\sum_{t=1}^n w_{t} - \hat{\rho} \bar{w}\sum_{t=1}^n y_{t-1}}{\bar{s}\sum_{t=1}^n w_{t} - n},\\
\hat {\mu} &= \bar{y} - \frac{\hat{\rho}}{n}\sum_{t=1}^{n} y_{t-1} - \hat{\beta} \bar{s},\\
\hat {\delta} &= \sqrt{\frac{\bar{s}}{(\bar{s} \bar{w} - 1)}},\;\; \hat {\gamma} = \frac{\hat{\delta}}{\bar{s}},\;\mbox{and}\;
\hat{\alpha} = (\hat{\gamma}^2 + \hat{\beta}^2)^{1/2},
\end{align*} 
where $\bar{s} = \displaystyle\frac{\sum_{t=1}^{n}s_t}{n}, \bar{w} = \frac{\sum_{t=1}^{n}w_t}{n} \text{ and } \bar{y} = \displaystyle\frac{\sum_{t=1}^{n}y_t}{n}.$

\noindent In next section, the efficacy of the estimation procedure is discussed on simulated data. Further, a real life application of the NIGAR(1) model is discussed.

\section{Simulation Study and Real World Data Application}
In this section, we discuss the parameter estimation of the proposed model on the simulated data. Further, the application of the proposed time-series model is shown on the Google equity price.\\
\noindent{\bf Simulation study:}
For simulation study, it is required to generate the iid inverse Gaussian distributed random numbers $G_i \sim IG(\mu_1, \lambda_1),\;i=1,2,\cdots,N$. We use the algorithm mentioned in \cite{Devroye1986}, which involves following steps:
\begin{itemize}
    \item []{\it Step 1}: Generate standard normal variate $N$ and set $Y = N^2$.
    \item[] {\it Step 2}: Set $X_1 = \mu_1 + \frac{\mu_{1}^2 Y}{ 2 \lambda_1} - \frac{\mu_1 }{2 \lambda_1}\sqrt{4\mu_1 \lambda_1 Y + \mu_1^2 Y^2}$.
    \item []{\it Step 3}: Generate  uniform random variate $U[0,1]$.
    \item[]{\it Step 4}: If $U<= \frac{\mu_1}{\mu_1 + X_1}$, then $G = X_1$; else $G = \frac{\mu_1^{2}}{X_1}.$ 
\end{itemize}
Note that the form of the pdf taken in Devroye (1986) is different than the form given in \eqref{IG_density}, so it is required to make the following substitutions for the parameters $\mu_1 = \delta/\gamma$ and $\lambda_1 = \delta^2.$ These substitutions using above steps will generate random numbers from IG$(\gamma, \delta)$ with pdf given in \eqref{IG_density}. In order to generate the i.i.d. NIG innovation terms, we use the normal variance-mean mixture form of the NIG with inverse Gaussian as mixing distribution and standard normal distribution representation given in \eqref{mean_variance_mixture}, which will give
$\varepsilon_{t} \sim NIG(\alpha, \beta, \mu, \delta)$.
A simulated series of NIGAR(1) model is plotted in Fig 1. It is intuitive from the plot that the series has roughly constant mean and constant variance over time with occasional large jumps. This kind of behaviour can't be explained by using standard Gaussian innovation term autoregressive model.


\begin{figure}[ht!]
\centering
\subfigure[]{
\includegraphics[width=7.8cm, height=7cm]{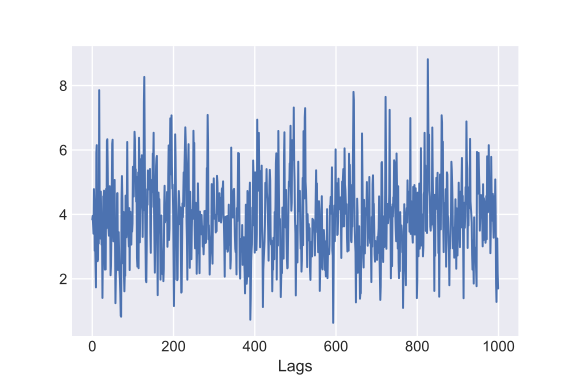}}
\quad
\subfigure[]{
\includegraphics[width=7.8cm, height= 7cm]{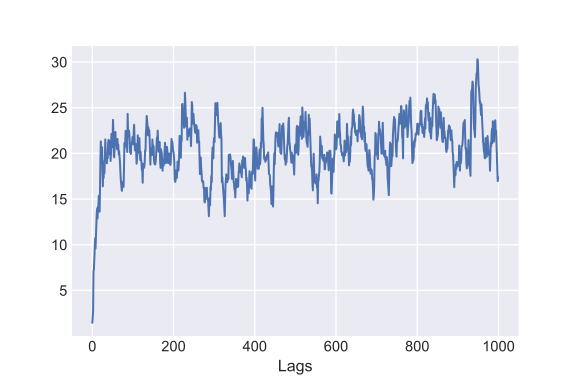}}
\caption{The figure shows the simulated NIGAR(1) series (size =1000) for parameters $\delta=2, \gamma =2, \mu =1, \beta =1$ with (a) $\rho = 0.5$ and (b) $\rho = 0.9$.}
\end{figure}

Next, we simulate an NIG random sample of size 10,000 with true parameter values as $ \alpha = 2.24, \beta = 1, \mu = 1$ and  $\delta = 2.$ Using the relationship $y_t = \rho y_{t-1} + \varepsilon_t$, the simulated data is used to generate the NIGAR(1) time series with $\rho = 0.5$ and $y_0 = \varepsilon_0$. To analyse the behavior of the EM algorithm, we use two different stopping criteria, first is based on the relative change of the log-likelihood (criterion 1) and the other is based on the relative change of the parameter values (criterion 2). Let $L^{(k)}$ denote the log-likelihood value after $k$ iterations, then we stop the iterations when $\abs{ \frac{L^{(k)} - L^{(k+1)}}{L^{(k)}}} < 10^{-5}.$ In other criterion, we stop the iterations when
$$
\max \left\{\abs{ \frac{\alpha^{(k+1)} - \alpha^{(k)}}{\alpha^{(k)}}}, \abs{ \frac{\beta^{(k+1)} - \beta^{(k)}}{\beta^{(k)}}}, \abs{ \frac{\mu^{(k+1)} - \mu^{(k)}}{\mu^{(k)}}}, \abs{ \frac{\gamma^{(k+1)} - \gamma^{(k)}}{\gamma^{(k)}}}, \abs{ \frac{\delta^{(k+1)} - \delta^{(k)}}{\delta^{(k)}}} \right\} < 10^{-5}.$$

\noindent \textbf{Case 1:} We use the parameter estimates derived from Prop. \ref{NIG_AR(1)} on simulated data. 
\noindent Using the first stopping criterion, the estimated values of parameters are $\hat{\alpha} = 2.073, \hat{\beta} = 0.880, \hat{\mu} = 1.077, \hat{\delta} = 1.953, \text{ and } \hat{\gamma} = 1.877.$

\noindent When second criterion is used the estimated values are $\hat{\alpha} = 1.980, \hat{\beta} = 0.824, \hat{\mu} = 1.125, \hat{\delta} = 1.897,  \text{ and }\hat{\gamma} = 1.800.$ In order to check the convergence of algorithm, we used different set of initial values which resulted in approximately same estimates. The results are summarized in following table:

\begin{center}
\begin{table}[ht!]
\centering
\begin{tabular}{||c||c||c|}
\hline 
 & Actual & Estimated \\ 
\hline 
Criterion 1 & $\alpha=2.24$, $\beta=1$, $\mu=1$, $\delta =2$ &$\hat{\alpha}=2.091$, $\hat{\beta}=0.892$, $\hat{\mu}=1.042$ , $\hat{\delta} = 1.962$\\ 
\hline 
Criterion 2&  $\alpha=2.24$, $\beta=1$, $\mu=1$, $\delta =2$ &$\hat{\alpha}=2.090$, $\hat{\beta}=0.892$, $\hat{\mu}=1.042$ , $\hat{\delta} = 1.962$\\ 
\hline 
\end{tabular} 
\caption{Actual and estimated parameters using different stopping criterion for case 1.}
\end{table}
\end{center}
\textbf{Case 2:} We use the parameter estimates of NIG proposed by Karlis \cite{Karlis2002} and estimate the $\rho$ using conditional least square estimate given in \eqref{cls_rho}.
The estimated values are tabulated below as:
\begin{center}
\begin{table}[ht!]
\centering
\begin{tabular}{||c||c||c|}
\hline 
 & Actual & Estimated \\ 
\hline 
Criterion 1 & $\alpha=2.24$, $\beta=1$, $\mu=1$, $\delta =2$ &$\hat{\alpha}=2.073$, $\hat{\beta}=0.880$, $\hat{\mu}=1.077$ , $\hat{\delta} = 1.953$\\ 
\hline 
Criterion 2&  $\alpha=2.24$, $\beta=1$, $\mu=1$, $\delta =2$ &$\hat{\alpha}=1.980$, $\hat{\beta}=0.824$, $\hat{\mu}=1.125$ , $\hat{\delta} = 1.897$\\ 
\hline 
\end{tabular} 
\caption{Actual and estimated parameters using different stopping criterion for Case 2.}
\end{table}
\end{center}

\noindent In Fig. 2, we display the box plots of the parameters estimated using EM algorithm on 100 simulated data sets with the help of Prop. \ref{NIG_AR(1)}. Observe that the parameters are symmetric and centered on $\hat{\alpha} = 2.22 , \hat{\beta} = 1.01 , \hat{\mu} = 0.99, \hat{\delta} = 1.99 \text{ and } \hat{\rho} = 0.501$. From the whiskers size we can observe that the distribution of the parameters may be heavy tailed.

\begin{figure}[ht!]
\centering
\includegraphics[width=12cm, height=8cm]{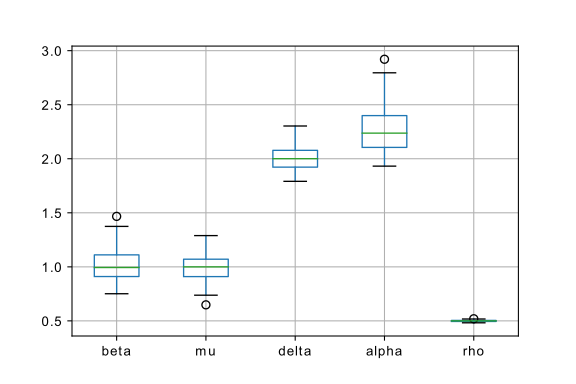}
\caption{Box plots of the estimates of parameters.}
\end{figure}



\noindent{\bf Real world data application:} The historical financial data of Google equity is collected from Yahoo finance. The whole data set covers the period from December 31, 2014 to April 30, 2021. Initially, the data contained 1594 data points with 6 features having Google stock's {\it open price}, {\it closing price}, {\it highest value}, {\it lowest value}, {\it adjusted close price} and {\it volume} of each working day end-of-the-day values. In order to apply the proposed NIGAR(1) model, we take the univariate time series $y_t$ to be the end-of-the-day closing prices. The Google equity closing price is demonstrated as time series data in Fig. 3. 
\begin{figure}[ht!]\centering
\includegraphics[width=12cm, height=8cm]{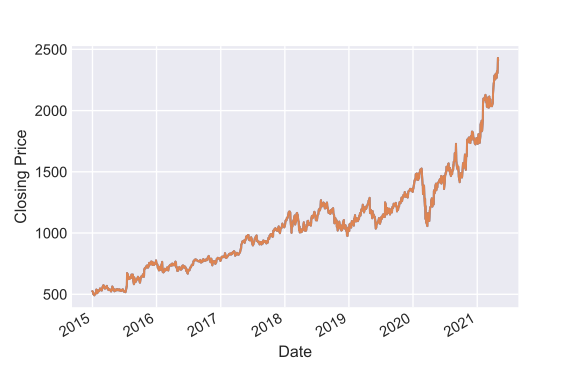}
\label{fig:clsprice}   
\caption{The closing price (in\$) of Google equity.}
\end{figure}

We assume that the innovation terms $\varepsilon_t$ of time series data $y_t$ follows NIG. We use ACF and PACF plot to determine the appropriate time series model components for closing price. Fig. 4 shows the ACF and PACF plot of time-series data $y_t$. 
\begin{figure}[ht!]
\centering
\subfigure[ACF Plot]{
\includegraphics[width=8cm, height=6cm]{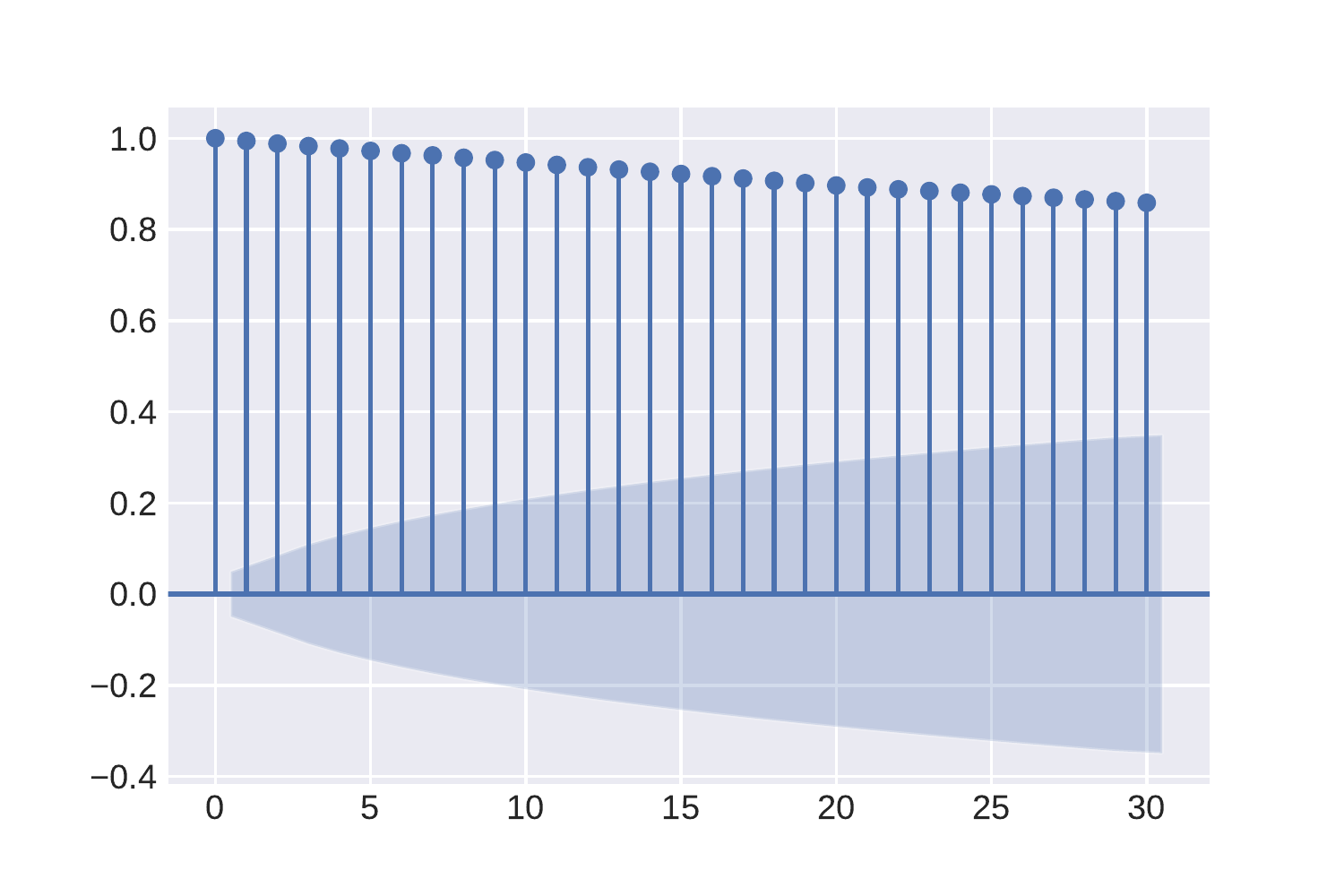}}
\subfigure[PACF Plot]{
\includegraphics[width=8cm, height= 6cm]{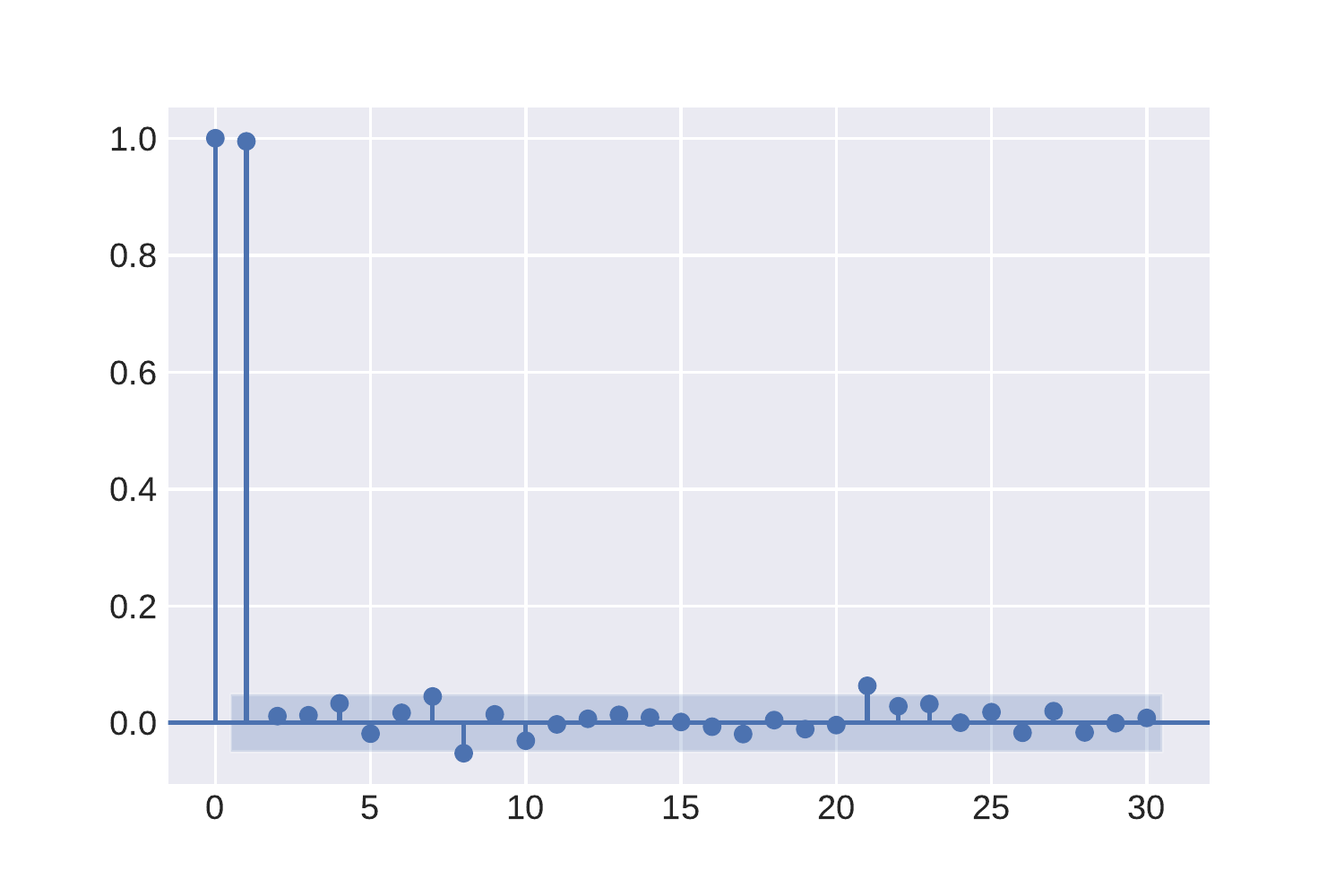}}
\caption{The ACF and PACF plot of closing price of Google equity.}
\end{figure}


We observe that in PACF plot there is a significant spike at lag 1, also we ignore the spike at lag 0 as it represents the correlation between the term itself which will always be 0. PACF plot indicates that the closing prices follow AR model with lag 1. In ACF plot, all the spikes are significant for lags upto 30 which implies that the closing price is highly correlated. Therefore, from ACF and PACF plot the assumed NIGAR(1) model is expected to be good fit for data.
First we estimate the $\rho$ parameter using the conditional least square method as mentioned in \eqref{cls_rho}
$$\hat{\rho} = \frac{\sum_{t=0}^{n}(y_{t} - \overline{y})(y_{t+1} - \overline{y})}{\sum_{t=0}^{n}(y_{t} - \overline{y})^2}.   $$
The estimated value is $\hat{\rho} = 0.9941$. Using the estimated value of $\rho$ and relation $ \varepsilon_t = y_t - \rho y_{t-1}$ we will get the innovation terms $\varepsilon_t.$ The distribution plot of innovation terms is shown in Fig. 5(a).\\

 
\noindent The Kolmogorov-Smirnov (KS) normality test and Jarque–Bera (JB) test are performed on $\varepsilon_t$ to test if the innovation terms are Gaussian. The $p$-value in both the tests was 0, which indicates that the $\varepsilon_t$ may not be from Gaussian distribution. Therefore, we fit the proposed NIGAR(1) model on Google closing price dataset. The ML estimates of parameters using EM algorithm are $\hat{\alpha} = 0.0202, \hat{\beta} = 0.0013, \hat{\mu} = 0.226, \hat{\delta} = 9.365, \text{ and } \hat{\gamma} = 0.0201$ with initial guesses as $\hat{\alpha}^{(0)} = 0.0141, \hat{\beta}^{(0)} = 0.01, \hat{\mu}^{(0)} = 0.01, \hat{\delta}^{(0)} = 0.01, \text{ and } \hat{\gamma}^{(0)} = 0.01.$ The relative change in the log-likelihood value with tolerance value 0.0001 is used as stopping criterion. It is worthwhile to mention that the estimated $\beta$ is close to 0 and the estimated $\mu$ can be interpreted as intercept term in the AR(1) model. Based on the Google equity prices data and estimated parameters, an equivalent model to \eqref{main_model} can be described as
$$
Y_t = \mu + \rho Y_{t-1} + \epsilon_t,
$$
where $\epsilon_t = \sqrt{G}Z$ with $G\sim$IG$(\gamma, \delta)$, with $\hat{\rho} = 0.9941, \hat{\mu} = 0.226, \hat{\delta} = 9.365, \text{ and } \hat{\gamma} = 0.0201.$\\
\begin{figure}[h!]
\centering
\subfigure[Distribution of error terms]{
\includegraphics[width=8cm, height=6cm]{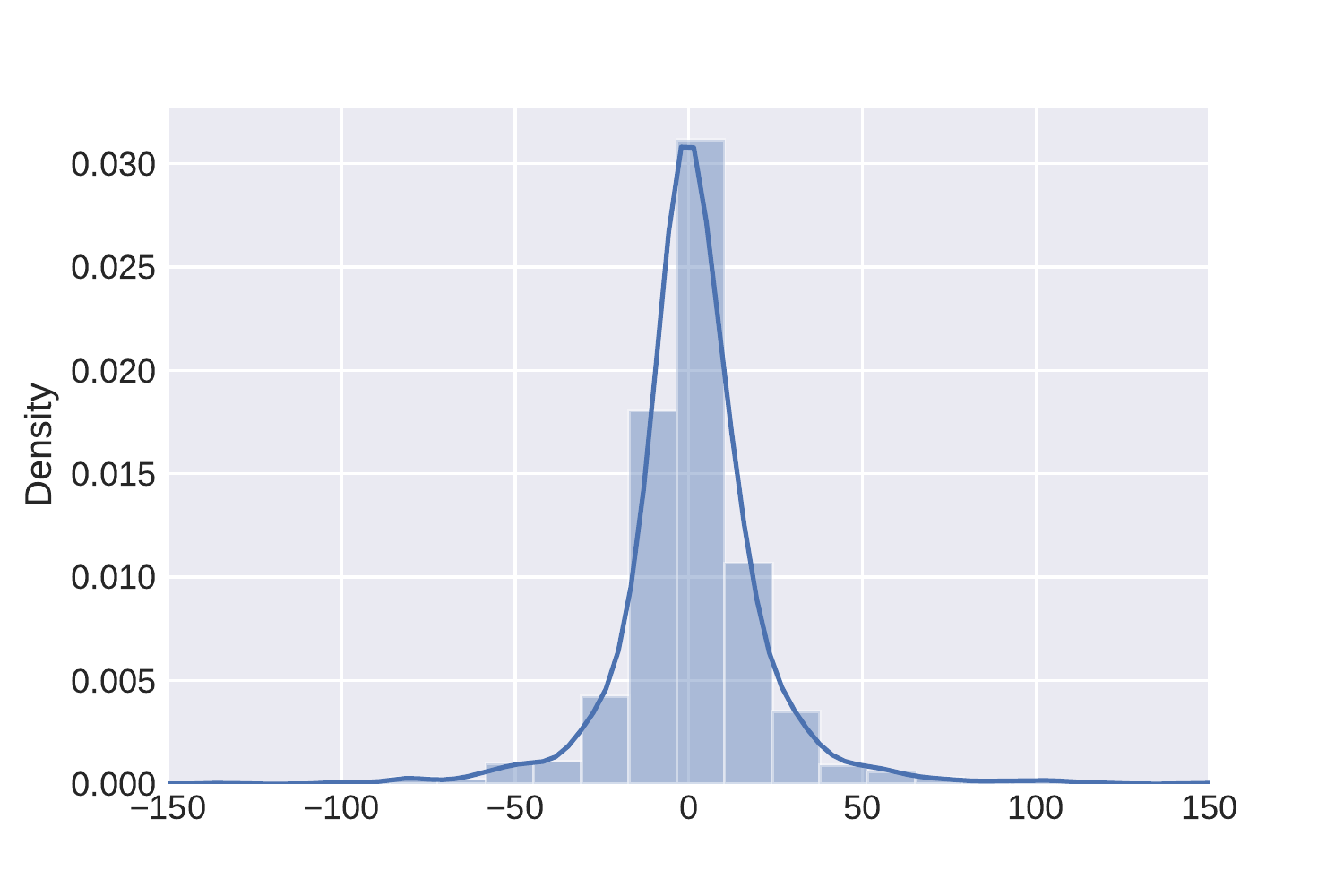}}
\quad
\subfigure[QQ Plot]{
\includegraphics[width=8cm, height= 6cm]{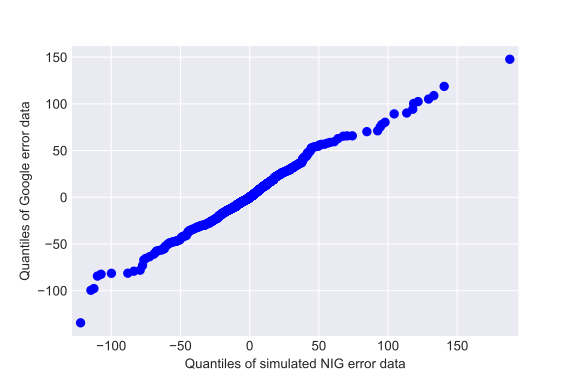}}
\caption{Plot of error terms distribution and QQ plot between simulated and actual values}
\end{figure}

To test the reliability of the estimated results, we used the simulated data with true parameter values as $\alpha = 0.02, \beta = 0, \mu = 0.23, \delta = 9.5, \text{ and } \gamma = 0.02.$ The QQ-plot (Fig. 5(b)) between the error terms $\varepsilon$ of data and the NIG simulated data indicates that the innovation terms are significantly NIG with few outliers. Also, we performed the 2-sample KS test which is used to check whether the two samples are from same distribution. The 2-sample KS test on the simulated data and the innovation terms resulted in the $p$-value$ = 0.3413.$ Therefore, we failed to reject the null hypothesis that the samples are from same distribution and conclude that both samples follow the same distribution i.e. NIG$(\alpha=0.02, \beta=0, \mu=0.23 , \delta=9.5)$.


\section{Conclusions}
The introduced AR(1) model with NIG innovations very well model the prices of the Google equity for the period from December 31, 2014 to April 30, 2021. Note that NIG distribution is semi heavy tailed and used to model equity returns in the literature. So, the proposed model could be useful in modeling time-series which manifests large observations where the classical AR(1) model with Gaussian-based innovations can not be used. We estimate the parameters using EM algorithm since direct maximization of likelihood function is difficult due to the modified Bessel function of the third kind term. The EM algorithm estimates the parameter efficiently and the results are verified on the simulated data.\\
\vtwo

\noindent{\bf Acknowledgements:}  Monika S. Dhull would like to thank the Ministry of Education (MoE), Government of India, for supporting her PhD research work.

\vone
\noindent


\end{document}